\documentclass[letterpaper,10pt,conference]{ieeeconf}
\IEEEoverridecommandlockouts      
\author{Shuang~Gao and Peter E. Caines 
\thanks{*This work is supported by NSERC (Canada) Grant RGPIN-2024-06612.}
\thanks{Shuang Gao is with the Department of Electrical Engineering, Polytechnique Montreal, GERAD (Group for Research in Decision Analysis), and UNIQUE (Unifying Neuroscience and Artificial Intelligence - Quebec), Montreal, QC, Canada.   {Email: \tt\small    shuang.gao@polymtl.ca}. Peter E. Caines is with the Department of Electrical and Computer Engineering, McGill University \& GERAD,
   Montreal, QC, Canada. 
         {Email: \tt\small   peterc@cim.mcgill.ca}.}
\thanks{SG gratefully acknowledges Roland P. Malham\'e and Evelyn Hubbard for their helpful feedback on this work. }
}%

\usepackage[noadjust]{cite}

\usepackage{graphics}
\usepackage{epstopdf}

\usepackage{amsmath}
  \usepackage[font=footnotesize]{subfig}

\usepackage{hyperref}
\usepackage{enumerate}
\usepackage{color}

\newcommand{\BR}{ \operatorname{R}}  

\newcommand*\TRANS{{\mathpalette\doTRANS\empty}}
\makeatletter
\newcommand*\doTRANS[2]{\raisebox{\depth}{$\m@th#1\intercal$}}
\makeatother
\newcommand\MATRIX[1]{\begin{bmatrix} #1 \end{bmatrix}}

\usepackage{url}
\usepackage{amssymb,mathtools}

\usepackage[standard,thmmarks,amsmath]{ntheorem}
\usepackage{times}
\usepackage{dsfont}
\usepackage{tikz}

\begin{document}
%
\title{Transmission Neural Networks: Inhibitory and Excitatory Connections}
%
%
%
\maketitle 
\begin{abstract}
This paper extends the Transmission Neural Network model proposed by Gao and Caines in  \cite{ShuangPeterTransNN22, ShuangPeterTransNNControl25,ShuangPeterTransNNCDC25} to incorporate inhibitory connections and neurotransmitter populations. 
The extended network model contains binary neuronal states,  transmission dynamics, and inhibitory and excitatory connections. Under technical assumptions, we establish the characterization of the firing probabilities of neurons, and show that such a characterization considering inhibitions can be equivalently represented by a neural network where each neuron has a continuous state of dimension 2. Moreover, we incorporated neurotransmitter populations into the modeling  and  establish the limit network model when the number of neurotransmitters at all synaptic connections go to infinity. Finally, sufficient conditions for stability and contraction properties of the limit network model are established. 
\end{abstract}

\section{Introduction}

Modelling neuronal systems is important to understand intelligence  and to analyze and control such systems. Networks of neurons can learn input-output relations in the context of artificial neural networks \cite{block1962perceptron,rumelhart1986learning,lecun2015deep,goodfellow2016deep}. Moreover, recent advances show that combining detailed brain networks,  such as the  Drosophila connectome,  with relatively simple neuronal dynamics can predict  neural activities associated with specific  sensorimotor processing~\cite{shiu2024drosophila}. 
 Neuronal models with different levels of abstractions have been proposed to characterize the behaviors of neuronal systems \cite{gerstner2014neuronal,Bullo2025}, ranging from the detailed descriptions of the dynamics  of individual neurons by Hodgkin and Huxley \cite{hodgkin1952quantitative} to network-level models characterizing interactions \cite{gerstner2014neuronal,hopfield1982neural,hopfield1984neurons}. 

Neural network models that adopt a binary-state representation of neuronal systems 
offers certain advantages: (a) one can focus on  network level properties as in the work of Hopfield \cite{hopfield1982neural} and those on Boltzmann Machines \cite{hinton1983optimal, ackley1985learning}
and (b)  continuous-valued neural networks can be binarized to provide more efficient algorithms and learning models \cite{courbariaux2015binaryconnect,hubara2016binarized}.
 In addition, the binary state for each neuron  can be naturally linked with a continuous value by taking the probability of neurons being activated as a neuronal state \cite{hopfield1984neurons,ShuangPeterTransNN22}, for which  control-theoretic properties including stability can be established (see e.g. \cite{Bullo2025,ShuangPeterTransNN22}). 
%

Inhibition that suppresses the activity of neurons is essential for  neuronal systems \cite{eccles1954cholinergic,hartline1956inhibition, kandel2000principles}.
Inhibitory properties have been considered in  models of neurons with different formulations, including the Wilson-Cowan model for neuronal populations \cite{wilson1972excitatory,wilson1973mathematical} among others    (e.g. \cite{cottrell1992mathematical,turova1997stochastic}).

The Transmission Neural Network (TransNN) model proposed in \cite{ShuangPeterTransNN22,ShuangPeterTransNNControl25, ShuangPeterTransNNCDC25} has established a natural connection between   neural networks and virus spread models, where the connection of the nodes resembles the process of synaptic transmission. The work \cite{ShuangPeterTransNNControl25, ShuangPeterTransNNCDC25} further investigated how TransNN approximates stochastic neural networks with binary nodal states, and proposed TransNN-based approximate control algorithms for controlling these stochastic networks.   Generalizing TransNN models to include inhibitory connections is the main focus of the current paper.

\textbf{Contribution:} This work extends the TransNN models with binary nodal states in \cite{ShuangPeterTransNNControl25,ShuangPeterTransNNCDC25} to include inhibitions, and  identifies the corresponding  model for the probability of excitation.  We show that such networks with inhibition can be equivalently represented by neural networks where a two-dimensional nodal state is associated with each neuron and the Tuneable Log-Sigmoid activation function in \cite{ShuangPeterTransNN22}  with each synaptic connection.  Moreover, we incorporate neurotransmitter populations in an extended TransNN model and establish its limit model by letting the number of neurotransmitters at all synaptic connections go to infinity. Finally, sufficient conditions for stability and contraction properties of the limit network model are established

\textbf{Notation}: $\BR$ denotes the set of real numbers. Let $\bar{\BR} \triangleq \BR \cup \{+\infty \}$ and $[n]\triangleq \{1,2,...,n\}$.   We use 
$W\triangleq [W_{ij}] \in \BR^{n\times n}$ to denote the matrix with its $ij^{\textup{th}}$ element specified by $W_{ij}$ for all $i, j \in [n]$. For a vector $v \in \bar{\BR}^n$, we use both $v_i$ and $[v]_i$ to denote its $i^{\text{th}}$ element. For a matrix $W \in \BR^{n\times n}$, $\|W\|_1 \triangleq \max_{i \in [n]} \sum_{j=1}^n|W_{ij}|$ and $\|W\|_\infty \triangleq \max_{j \in [n]} \sum_{i=1}^n|W_{ij}|$. For a vector $v\in \BR^n$, $\|v\|_1 \triangleq \sum_{i=1}^n|v_i|$ and $\|v\|_\infty \triangleq  \max_{i \in [n]}|v_i|$. 

\section{Transmission Dynamics with Inhibitory and Excitatory Connections}

Consider a network of $n$ neurons with interconnections through chemical synapses \cite{kandel2021principles}. The synaptic connection structure at time or layer{\footnote{We note that, throughout the paper,  time step $k$ can also be interpreted as layer $k$ in the context of  neural networks with multiple layers.}} $k\geq 0$ is represented by a directed   graph $\mathcal{G}^k = ([n], \mathcal{E}^k)$ with the node set $[n]\triangleq \{1,2,...,n\}$ and the edge set $\mathcal{E}^k\subset [n]\times [n]$, which may include self-loops. A directed connection from neuron $j$ to neuron $i$, denoted by the node pair $(i,j) \subset \mathcal{E}^k$, exists if the axon terminal of neuron $j$ has at least one synapse onto neuron $i$ at step $k$. Such synaptic connections could be either excitatory or inhibitory \cite{kandel2021principles}. Let $\mathcal{G}_h^k = ([n], \mathcal{E}_h^k)$ denote the subgraph of $\mathcal{G}^k$ with all inhibitory connections, and  $\mathcal{G}_e^k = ([n], \mathcal{E}_e^k)$  the subgraph of $\mathcal{G}^k$ with all excitatory connections. Furthermore, $\mathcal{E}_h^k \cup \mathcal{E}_e^k = \mathcal{E}^k$ and $\mathcal{E}_h^k \cap \mathcal{E}_e^k = \varnothing$, that is, a connection between two neurons will be either inhibitory or excitatory.  We allow self-loops to model autapses (i.e. synapses from a neuron onto itself)  in  neuronal systems \cite{bekkers2003synaptic}.  
Furthermore, since an autapse is either inhibitory or excitatory \cite{bekkers2003synaptic}, the self-loops considered are either  inhibitory or  excitatory, that is, for each node $i\in[n]$, its self-loop (i.e. the edge pair $(i,i)$) may appear in either $\mathcal{E}_h^k $  or $\mathcal{E}_e^k $, but not in both. 
Let $B_E^k $ denote the  binary-valued adjacency matrix  of  the subgraph $\mathcal{G}_e^k = ([n], \mathcal{E}_e^k)$ with all excitatory connections at step $k$, whose $ij^{\text{th}}$ element  is 1 if $(i, j) \in \mathcal{E}_e^k$, and $0$ otherwise.
Similarly, let $B_I^k$ denote the binary-valued adjacency matrix of the subgraph  $\mathcal{G}_h^k = ([n], \mathcal{E}_h^k)$ with all inhibitory connections at step $k$.
\subsection{Transmission Dynamics}
The state of a neuron $i \in [n]$ at step $k$ is denoted by a binary variable $X_i(k)$ that takes $1$ if the neuron fires and $0$ otherwise.
Consider the transmission dynamics with both excitatory and inhibitory connections as follows:
\begin{equation} \label{eq:dyn-ex-ih-update}
\begin{aligned}
	  X_i(k+1) = ~& \Big(1-\prod_{j\in E_i^{ \circ k}}(1- W_{ij}^k X_j(k))\Big)  \\
	& \times  \prod_{j \in I_i^{ \circ k}}(1- W_{ij}^k X_j(k)) 
\end{aligned}
\end{equation}
where $E_i^{ \circ k}  \triangleq  \{j: (i,j)\in \mathcal{E}_e^k \} $ denotes the set of incoming neighbouring nodes of $i$ with excitatory connection that potentially includes node $i$ at step $k$, $I_i^{ \circ k}  \triangleq  \{j: (i,j)\in \mathcal{E}_h^k \} = \{j: (i,j)\in  \mathcal{E}^k - \mathcal{E}_e^k  \}$ denotes the set of incoming neighbouring nodes of $i$ with inhibitory connections that potentially includes node $i$ at step $k$,  and $W_{ij}^k \in\{0,1\}$ is the binary variable that represents the successful transmission  when taking value $1$, otherwise $0$.  
The network model  with binary states in \eqref{eq:dyn-ex-ih-update} extends that in \cite{ShuangPeterTransNNControl25,ShuangPeterTransNNCDC25} by including inhibitory connections. 

\begin{remark}
In the  dynamics \eqref{eq:dyn-ex-ih-update}, a single effective inhibitory connection to a neuron  suppresses its excitation, whereas in the absence of inhibition, a single effective excitatory connection is sufficient to activate it.   
\end{remark}

\begin{remark}[Functional Completeness] \label{rem:func-completeness}
The existence of a constant input is possible for neuronal systems (e.g. {persistent firing for working memory \cite{fuster1971neuron}}).
With a constant input  $1$ as well as excitatory and inhibitory connections, the interaction rules in \eqref{eq:dyn-ex-ih-update} can form an NOR gate with a simple network structure as illustrated in Fig.~\ref{fig:NOR}. 
\begin{figure}[htb]
\centering
	\includegraphics[width=8cm]{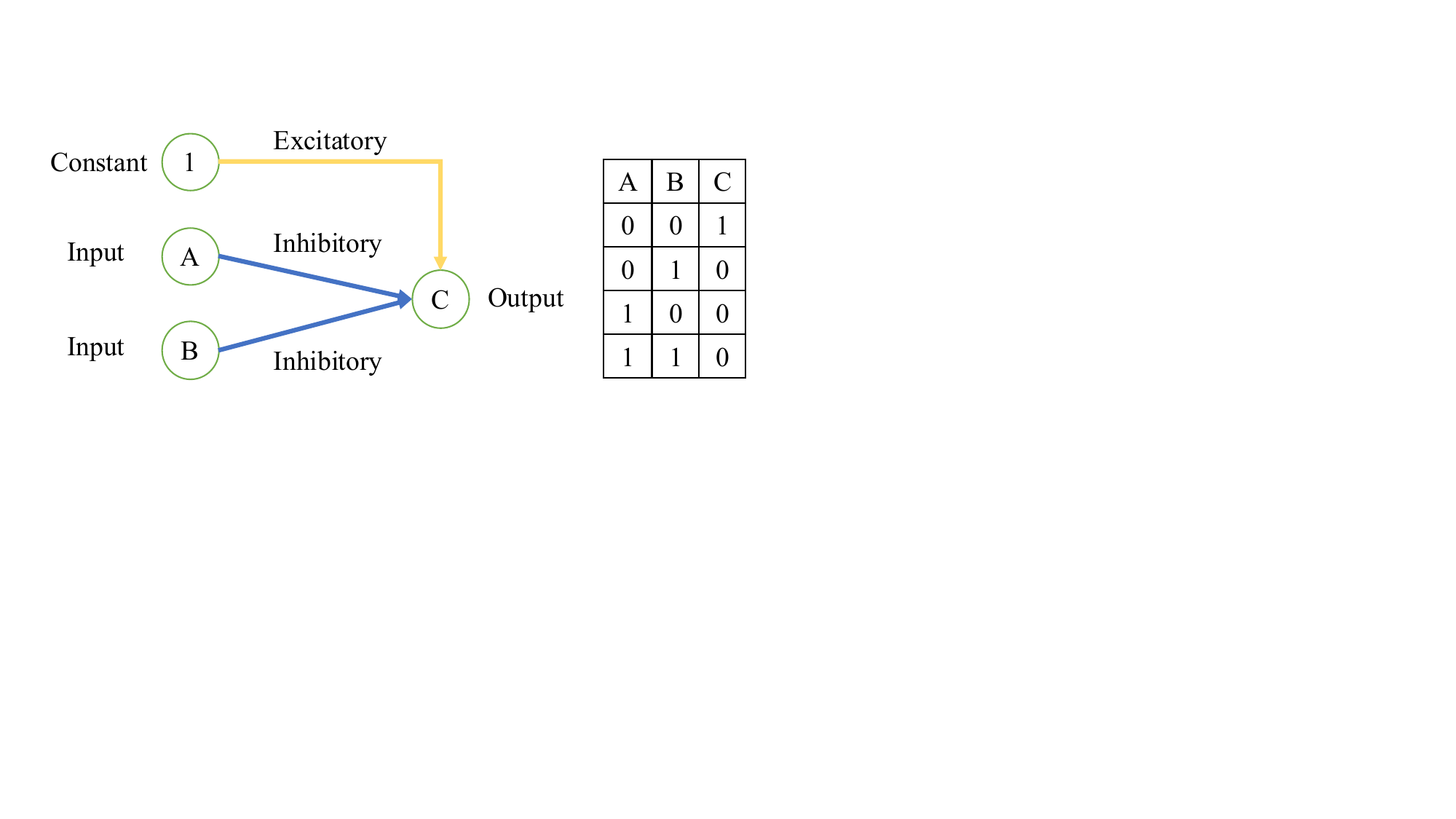}
	\caption{NOR gate  with inputs $A$ and $B$, and output $C$,  created with inhibitory connections and a constant signal 1. } \label{fig:NOR}
\end{figure}
Since any Boolean function can be implemented by combining NOR gates (i.e. the NOR gate is functionally complete), the interaction rule in \eqref{eq:dyn-ex-ih-update} with the constant signal~$1$ can reproduce any Boolean function by appropriate network designs. 
\end{remark}

\subsection{Stochasticity in the Transmission Dynamics}
Now we consider the case where  the state $X_i(k)$ and the transmission $W_{ij}^k$  for $i \in [n]$ and for $k\geq 0$ are stochastic. Let $X(k) \triangleq [X_1(k), \cdots, X_n(k)]^\TRANS$. 

 We introduce the assumptions on the independence of the transmissions and that of the states.  
\begin{description}
\item[\bf(A1)] (Memoryless Transmission) For any $k> 0$,	 $W^k\triangleq [W_{ij}^k]$ is independent of $\{W^t: 0\leq t < k\}$ and $\{X{(t)}: 0\leq t < k \}$. $W^0$ is independent of $X(0)$. 
\item[\bf (A2)] (Transmission Conditional Independence) For any $k\geq 0$, the  binary random variables $\{W_{ij}^k:i,j \in [n]\}$ representing the transmissions are jointly conditionally  independent given the current states $X(k)$. 
\end{description}

The assumption (A1) introduces the independence of the  transmission at the current step (or layer) from all the past   transmissions and  states.   (A2) imposes the conditional independence of transmissions across different links given the current state.    
\begin{remark}
{A special case is as follows: $W_{ij}^k$  is determined by flipping a biased coin independently with probability $w_{ij}^k$ being  head corresponding to the value $1$,  and  $X_i(k)$ is determined  similarly. 
}
\end{remark}
Under  the assumption (A1), the transmission dynamics in \eqref{eq:dyn-ex-ih-update} are {Markovian} and  
\begin{equation}
\begin{aligned}
	 \mathbb{E}(X_i(k+1)| (X(t))_{t\in [k]}) 
	 =  \mathbb{E}(X_i(k+1)| X(k)).
\end{aligned}
\end{equation} 
Furthermore, under  (A2), we have  
\begin{equation*}
\begin{aligned}
\mathbb{E}(X_i(k+1)|X(k)) & = \mathbb{E}\Big[ \Big(1-\prod_{j\in E_i^{ \circ k}}(1- W_{ij}^k X_j(k))\Big)  \\
	&\qquad \times  \prod_{j \in I_i^{ \circ k}}(1- W_{ij}^k X_j(k)) | X(k) \Big]\\
	& =\Big(1-\prod_{j\in E_i^{ \circ k}}\mathbb{E}(1- W_{ij}^k X_j(k)| X(k)) \Big)  \\
	&\qquad \times  \prod_{j \in I_i^{ \circ k}}\mathbb{E}(1- W_{ij}^k X_j(k) | X(k) ) .
	\end{aligned}
\end{equation*} 
For a state configuration $q\in\{0,1\}^n$, the conditional probability of reaching $q$ is given by
\begin{equation}\label{eq:conditional-config}
\begin{aligned}
	\text{Pr}(&X(k+1)=q|X(k))   = \prod_{i=1}^n \text{Pr}(X_i(k+1)=q_i|X(k)) 
\end{aligned}
\end{equation}
where the equality is due to the conditional independence of the transmissions $\{ W_{ij}^k: i,j \in [n]\}$ assumed in (A2). %

\begin{proposition}\label{prop:cond-prob-infection}
Assume (A1) and (A2) hold. 
Given a state configuration  ${x} \in \{0,1\}^n$ at step $k$, the transition probability to a state configuration $q\in\{0,1\}^n$ is  given by 
\begin{equation}\label{eq:MarkovTranProb2}
\begin{aligned}
	\textup{Pr}(X(&k+1)=q|X(k)=x)\\ 
	& = \prod_{i=1}^n \Big(q_i\rho_i(k+1) + (1-q_i) (1-\rho_i(k+1)) \Big)
\end{aligned}
\end{equation}
where
 \begin{equation}\label{eq:prob-configuration}
 \begin{aligned}
\rho_i(k+1) & = \Big(1-	 \prod_{j\in E_i^{ \circ k}} (1- w_{ij}^k {x}_j)\Big)\\
 	& \qquad \times \prod_{j\in I_i^{ \circ k}} (1- w_{ij}^k {x}_j), \quad i\in [n]
 \end{aligned}
\end{equation}
and   
$
w_{ij}^k\triangleq \textup{Pr}(W_{ij}^k=1 {| X_j(k)=1})
$.
\end{proposition}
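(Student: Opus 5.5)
We will combine the factorization \eqref{eq:conditional-config} with the conditional-expectation computation displayed just before the proposition, evaluated on the event $\{X(k)=x\}$. Since $X_i(k+1)\in\{0,1\}$, once we know $\rho_i(k+1)\triangleq \textup{Pr}(X_i(k+1)=1\mid X(k)=x)=\mathbb{E}(X_i(k+1)\mid X(k)=x)$, we get $\textup{Pr}(X_i(k+1)=q_i\mid X(k)=x)=q_i\rho_i(k+1)+(1-q_i)(1-\rho_i(k+1))$ for $q_i\in\{0,1\}$. Taking the product over $i$ via \eqref{eq:conditional-config} then gives \eqref{eq:MarkovTranProb2}. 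The work is therefore to identify $\rho_i(k+1)$ with \eqref{eq:prob-configuration}.

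For the first step, we condition on $X(k)=x$ in \eqref{eq:dyn-ex-ih-update}. The right-hand side is a product of factors $(1-W_{ij}^kx_j)$, with $j$ ranging over the disjoint sets $E_i^{\circ k}$ and $I_i^{\circ k}$ (disjoint because $\mathcal{E}_h^k\cap\mathcal{E}_e^k=\varnothing$). Expanding $1-\prod_{j\in E_i^{\circ k}}(\cdot)$ writes $X_i(k+1)$ as a difference of two products of functions of distinct $W_{ij}^k$. By (A2), the $W_{ij}^k$ are conditionally independent given $X(k)$, so the expectation of each product factorizes. This reproduces the displayed identity
\[
\mathbb{E}(X_i(k+1)\mid X(k)=x)=\Big(1-\prod_{j\in E_i^{\circ k}}\mathbb{E}(1-W_{ij}^kx_j\mid X(k)=x)\Big)\prod_{j\in I_i^{\circ k}}\mathbb{E}(1-W_{ij}^kx_j\mid X(k)=x).
\]

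For the second step, we evaluate each factor. If $x_j=0$, the factor equals $1=1-w_{ij}^kx_j$. If $x_j=1$, it equals $1-\textup{Pr}(W_{ij}^k=1\mid X(k)=x)$. We need this to equal $1-w_{ij}^k$ with $w_{ij}^k=\textup{Pr}(W_{ij}^k=1\mid X_j(k)=1)$.

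This is the main obstacle: the statement conditions only on $X_j(k)=1$, not on the full configuration. We plan to resolve it by interpreting the model, via (A1)--(A2) and the coin-flip remark, so that the transmission law on link $(i,j)$ depends on $X(k)$ only through $X_j(k)$. Then $\textup{Pr}(W_{ij}^k=1\mid X(k)=x)=\textup{Pr}(W_{ij}^k=1\mid X_j(k)=1)$ whenever $x_j=1$. We will state this reading explicitly. Note that when $x_j=0$ the value of $W_{ij}^k$ is irrelevant, since it is multiplied by $x_j=0$.

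Finally, we must check that conditional independence across nodes $i$ in \eqref{eq:conditional-config} holds. The $X_i(k+1)$ are functions of disjoint sets of links $\{W_{ij}^k\}_j$, which are jointly conditionally independent by (A2), so their indicator events are conditionally independent. This completes the product formula.
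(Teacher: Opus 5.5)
Your proposal is correct and follows the same route as the paper: factor the joint transition probability across nodes using the conditional independence in (A2) (equation \eqref{eq:conditional-config}), then evaluate each $\textup{Pr}(X_i(k+1)=q_i\mid X(k)=x)$ directly from \eqref{eq:dyn-ex-ih-update}. The paper's proof is a one-line sketch that silently identifies $\textup{Pr}(W_{ij}^k=1\mid X(k)=x)$ with $w_{ij}^k=\textup{Pr}(W_{ij}^k=1\mid X_j(k)=1)$, and your explicit assumption that the law of $W_{ij}^k$ depends on $X(k)$ only through $X_j(k)$, together with your use of the disjointness of $E_i^{\circ k}$ and $I_i^{\circ k}$, supplies what the paper leaves implicit.
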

\begin{proof}
	Following \eqref{eq:conditional-config}, we note that
\begin{equation} 
	\begin{aligned}
	\textup{Pr}(X(&k+1)=q|X(k)=x)\\ & = \prod_{i=1}^n \textup{Pr}(X_i(k+1)=q_i|X(k)= x).
\end{aligned}
\end{equation}
Then by explicitly evaluation each probability  $\textup{Pr}(X_i(k+1)=q_i|X(k)= x),$ using the dynamics \eqref{eq:dyn-ex-ih-update},  we obtain the desired result. 
\end{proof}

The result above generalizes that of \cite[Prop.~1]{ShuangPeterTransNNControl25} by including inhibitions in the transmission dynamics.

To further simplify the model, we introduce two other assumptions below. 
\begin{description}
	\item[\bf (A3)] (Transmission Independence at  Step $k$) For $k\geq 0 $, $\{ W_{ij}^k: i,j \in [n]\}$ are  independent and  for each $i, j \in [n]$, $W_{ij}^k$ is independent of  $\{ X_q(k): q \in [n],  q\neq j\}$.  
	\item[\bf (A4)] (State Independence among Neurons) {Upto some terminal step $T$, for each $k\leq T$, the states $\{X_i(k):i\in [n]\}$ are independent.}
\end{description}
 The assumption (A3) introduces the independence of the transmissions across different links which are also independent of  states, and  (A3) is  more restrictive than (A2).  (A4) may be satisfied depending on the network structure and $T$. 
 
Under (A1), (A3) and (A4),  taking the expectation on both sides of the equation \eqref{eq:dyn-ex-ih-update} yields
\begin{equation} \label{eq:simplified-prob}
\begin{aligned}
	 \mathbb{E} X_i(k+1)  &= \mathbb{E}\Big[ \Big(1-\prod_{j\in E_i^{ \circ k}}(1- W_{ij}^k X_j(k))\Big)  \\
	&\qquad \times  \prod_{j \in I_i^{ \circ k}}(1- W_{ij}^k X_j(k))  \Big]\\
	& =\Big(1-\prod_{j\in E_i^{ \circ k}}\mathbb{E}(1- W_{ij}^k X_j(k)) \Big)  \\
	&\qquad \times  \prod_{j \in I_i^{ \circ k}}\mathbb{E}(1- W_{ij}^k X_j(k) )
\end{aligned} 
\end{equation}
for every $k \in\{0, \dots, T-1\}$.
 Let $w_{ij}^k\triangleq \operatorname{Pr}(W_{ij}^k=1 {| X_j(k)=1})$ denote the conditional probability of the successful transmission from node $j$ to node~$i$ at step $k$. Let  $p_i(k) \triangleq \textup{Pr}(X_i(k)=1)$. Then the equation \eqref{eq:simplified-prob} is equivalently represented by 
\begin{equation} \label{eq:prob-update-with-inhibition}
	\begin{aligned}
	 {p}_i(k+1) & = \Big(1-	 \prod_{j\in E_i^{ \circ k}} (1- w_{ij}^k {p}_j(k))\Big)\\
 	& \qquad \times \prod_{j\in I_i^{ \circ k}} (1- w_{ij}^k {p}_j(k)).
\end{aligned} 
\end{equation}
\subsection{Dynamics with State Transformation}

To further simplify the model, we introduce the notation $\pi_i(k)$ for the probability of \textcolor{black}{no} inhibition at the previous step $k-1$ (from all neighboring neurons) at neuron $i$:
\begin{equation} \label{eq:inhibit_pi_def}
	\pi_i(k) \triangleq \prod_{j\in I_i^{\circ k} } (1- w_{ij}^k p_j(k-1))   \in [0,1].
\end{equation}
Then the probability update in \eqref{eq:prob-update-with-inhibition} is equivalently given by 
\begin{equation}\label{eq:prod-evo-update}
	\begin{aligned}
		p_i(k+1) = \Big(1- \prod_{j\in E_i^{\circ k}}(1- w_{ij}^kp_j(k)) \Big) \pi_i(k+ 1) 
	\end{aligned}
\end{equation}
which yields
\begin{equation}\label{eq:inhibit_dyn_transform}
	1- \frac{p_i(k+1)}{\pi_i(k+1)} = \Big(1- \prod_{j\in E_i^{\circ k}}(1- w_{ij}^kp_j(k)) \Big). 
\end{equation}
From \eqref{eq:prod-evo-update}, clearly ${p_i(k)}\leq {\pi_i(k)}$ holds for all $i \in  [n]$ and all $k\geq 1$.  
Define the following states  (of Shannon information) 
\begin{align}
	s_i(k)  &  \triangleq \begin{cases}
	     - \log\left(1-\frac{p_i(k)}{\pi_i(k)}\right) , & \text{ if } \pi_i(k) \in (0,1]  \\
          0 , & \text{ if } \pi_i(k) =0
	\end{cases}
   \label{eq:s-transform} \\
	o_i(k)  &  \triangleq - \log \pi_i(k)  \label{eq:o-transform}
\end{align}	
where $o_i(k), s_i(k)\in[0, +\infty]$ and  $\log$ function is defined by 
\begin{equation}
	\log (x) \triangleq 
	\begin{cases}
		\ln (x) ,&   x \in (0, 1];\\
		 -\infty, & x =0 .
	\end{cases}
\end{equation}
\begin{remark}
For neuron $i$, the state $o_i(k)$ is the Shannon information associated with the absence of inhibition from the neighboring neurons at the previous step $k-1$,  which can be intuitively understood as the inhibition level. In particular, the state $o_i(k) =0$ represents \textcolor{black}{no} active inhibition at $k-1$ and $o_i (k) =\infty$ represents  active inhibition at $k-1$.  
The state $s_i(k)$ can be viewed as the Shannon information  of neuron $i$ being ``resting" at step $k$ taking the previous  inhibition status into consideration.  The state $s_i(k) = 0$ if the neuron $i$ is resting with probability $1$ at time $k$ (i.e. $p_i(k)=0$), and $s_i(k) =+\infty$ if it fires with probability $1$ (i.e. $p_i(k)=1$). 
\end{remark}

\begin{remark}
The inverse mappings of the state transformations in  \eqref{eq:o-transform} and \eqref{eq:s-transform} are respectively given by
\[
\begin{aligned}
	& \pi_i(k) = e^{-o_i(k)} ~~\text{and}~~  p_i(k) =  e^{-o_i(k)} (1- e^{-s_i(k)}).
\end{aligned}
\]
\end{remark}
Taking logarithm  and negation on both sides of \eqref{eq:inhibit_dyn_transform} yields the following  representation of the evolution of $s_i(k)$ 
\begin{equation}
\begin{aligned}
s_i(k+1) & = \sum_{j\in E_i^{\circ k}} \Psi(w_{ij}^k \pi_j(k), s_j(k))	
\end{aligned}
\end{equation}
where  
$$
\Psi(w, x) \triangleq -\log(1-w+we^{-x})
$$
is the Tuneable Log-Sigmoid (TLogSigmoid) activation function identified in \cite{ShuangPeterTransNN22}.
Replacing  $\pi_j(k)$ by $e^{-o_j(k)}$ from the relation \eqref{eq:o-transform} yields
\begin{equation}\label{eq:s-evo-update}
s_i(k+1)   = \sum_{j\in E_i^{\circ k}} \Psi(w_{ij}^k e^{-o_j(k)}, s_j(k)), \quad i \in [n]
\end{equation}
with initial condition	$s_i(0)  = - \log\left(1-{p_i(0)}\right) $ for all $i\in [n]$. 
 Furthermore, taking logarithm and negation on both sides of \eqref{eq:inhibit_pi_def} yields dynamics of state $o_i$ as follows:
\begin{equation}\label{eq:o-evo-update}
	o_i(k+1) = \sum_{j\in I_i^{\circ k} } \Psi(w_{ij}^k e^{-o_j(k)}, s_j(k)), \quad i \in [n]
\end{equation}
with initial condition $o_i(0)=0$ for all $i\in [n]$, if there is no inhibition before the initial step.

Equations \eqref{eq:s-evo-update} and \eqref{eq:o-evo-update} then completely characterize the evolution of the state $(s_i(k), o_i(k))$ over time step $k\geq 0$. 
%
The evolution can be computed as follows. 
\begin{enumerate}
	\item Start with $o_i(0)=0$ for all $i \in [n]$, if there is no inhibition before step $k=0$.
\item Compute the state $s_i(0) = - \log\left(1-{p_i(0)}\right)$ based on the probability of infection $p_i(0)$ for all node $i \in [n]$.
\item Iteratively compute the states $s_i$  and $o_i$ over time  based on the dynamics specified in \eqref{eq:s-evo-update} and \eqref{eq:o-evo-update}. 
\end{enumerate}

\begin{remark}[Convexity]
	We  highlight that if $w \in(0,1)$ and $x\in (0, \infty)$, the activation function $
\Psi(w, x) \triangleq -\log(1-w+we^{-x})
$ is strictly convex in $w $ and strictly concave in $x$, and are strictly monotonically increasing with respect to   $x$ and $w$ by evaluating the derivatives (see \cite[Section V]{ShuangPeterTransNN22}).
Let 
\[
g(z; v, s) \triangleq \Psi(v e^{-z}, s), \quad v \in (0,1), s \in (0,+\infty)
\]
Then 
$
\partial_z g =  \partial_w \Psi(ve^{-z}, s) v (-1)e^{-z} ,
$ and
\[
\begin{aligned}
	\partial_{zz}^2 g &  =   \partial_w \Psi(ve^{-z}, s) v e^{-z}+ \partial_{ww}^2 \Psi(ve^{-z}, s)     (v (-1)e^{-z}) ^2 \\
	& = ve^{-z}(\partial_w \Psi(ve^{-z}, s) + \partial_{ww}^2 \Psi(ve^{-z}, s)     v e^{-z}) >0
\end{aligned}
\]
since $\partial_w \Psi(ve^{-z}, s) >0$ and $\partial_{ww}^2 \Psi(ve^{-z}, s)>0$ (see \cite[Section V]{ShuangPeterTransNN22}).
This implies that the function $\Psi(v e^{-z}, s)$ is strictly convex in $z$ when $v \in (0,1), s \in (0,+\infty)$. 
\end{remark}

\section{Models with Neurotransmitter Populations}
To account for different realizations of the effective receptions of different neurotransmitter molecules over the same link, we generalize the  previous model  as follows: 
\begin{equation}\label{eq:population-model}
\begin{aligned}
	 X_i(k+1) =   & \Big(1-\prod_{j\in E_i^{\circ k}} \prod_{\ell =1}^{a_{ij}^k} (1- W_{ij (\ell) }^{k} X_j(k)) \Big) \\
	& \quad \times  \prod_{j \in I_i^{\circ k}} \prod_{\ell =1}^{a_{ij}^k} (1- W_{ij (\ell)}^k X_j(k) )\end{aligned}
\end{equation}
where $a_{ij}^k$ denotes the number of neurotransmitters sent from neuron $j$ to neuron $i$ at step $k$, and $W_{ij(\ell)}^k$ is a binary variable that represents the successful reception of the $\ell^{th}$ neurotransmitter at step $k$ from neuron $j$ to neuron $i$ when taking $1$, otherwise $0$.  
In this way, the successful reception of a neurotransmitter represented by a binary random variable $ W_{ij (\ell) }^{k}$ is realized at each transmission $\ell$ at step $k$ from neuron $j$ to neuron $i$.   

We introduce the following assumptions regarding independence of neurotransmissions. 
\begin{description}
\item[\bf(A5)]  (Memoryless Neurotransmission)  For any $k> 0$,	 $W^k\triangleq [W_{ij(\ell)}^k]\in \BR^{n\times n \times a_{ij}^k}$ is independent of $\{W^t: 0\leq t < k\}$ and $\{X{(t)}: 0\leq t < k \}$. 
\item[\bf (A6)]  (Neurotransmission Conditional Independence) For any $k\geq 0$, the  binary random variables $\{W_{ij (\ell)}^k:i,j \in [n], \ell \in  [a_{ij}^k]\}$ representing the transmissions are jointly conditionally independent given the current states $X(k)$. 
\end{description}
Following similar steps of the previous section, under assumptions (A5) and (A6), the following hold:
\begin{equation}
\begin{aligned}
	 \mathbb{E}&(X_i(k+1)| (X(t))_{t\in [k]}) 
	 =  \mathbb{E}(X_i(k+1)| X(k))\\
	 &  = \mathbb{E}\Big[ \Big(1-\prod_{j\in E_i^{ \circ k}}\prod_{\ell =1}^{a_{ij}^k}(1- W_{ij(\ell)}^k X_j(k))\Big)  \\
	&\qquad \times  \prod_{j \in I_i^{ \circ k}}\prod_{\ell =1}^{a_{ij}^k}(1- W_{ij(\ell)}^k X_j(k)) | X(k) \Big]\\
	& =\Big(1-\prod_{j\in E_i^{ \circ k}}\prod_{\ell =1}^{a_{ij}^k}\mathbb{E}(1- W_{ij(\ell)}^k X_j(k)| X(k)) \Big)  \\
	&\qquad \times  \prod_{j \in I_i^{ \circ k}}\prod_{\ell =1}^{a_{ij}^k}\mathbb{E}(1- W_{ij(\ell)}^k X_j(k) | X(k) ) .
\end{aligned}
\end{equation} 

\begin{proposition}\label{prop:cond-prob-infection-population}
Assume (A5) and (A6) hold. 
Given a state configuration ${x} \in \{0,1\}^n$ at step $k$, the transition probability to a state configuration $q\in\{0,1\}^n$ is  given by 
\begin{equation}
\begin{aligned}
	\textup{Pr}(&X(k+1)=q|X(k)=x)\\ 
	& = \prod_{i=1}^n \Big(q_i\rho_i(k+1) + (1-q_i) (1-\rho_i(k+1)) \Big)
\end{aligned}
\end{equation}
where
 \begin{equation}\label{eq:prob-configuration-population}
 \begin{aligned}
\rho_i(k+1) & = \Big(1-	 \prod_{j\in E_i^{ \circ k}} (1- w_{ij}^k {x}_j)^{a_{ij}^k}\Big)\\
 	& \quad \times \prod_{j\in I_i^{ \circ k}} (1- w_{ij}^k {x}_j)^{a_{ij}^k}, \quad i\in [n].
 \end{aligned}
\end{equation}
\end{proposition}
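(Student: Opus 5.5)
The argument follows the proof of Proposition~\ref{prop:cond-prob-infection}, with the single-link transmission variables replaced by the neurotransmitter-level variables $W_{ij(\ell)}^k$. The implicit modelling assumption is that, for each link $(i,j)$, the receptions $W_{ij(\ell)}^k$, $\ell\in[a_{ij}^k]$, are identically distributed given $X_j(k)=1$, with $w_{ij}^k\triangleq \textup{Pr}(W_{ij(\ell)}^k=1\,|\,X_j(k)=1)$ independent of $\ell$. This is what produces the exponent $a_{ij}^k$ in \eqref{eq:prob-configuration-population}. Otherwise one would state the result with $\prod_\ell(1-w_{ij(\ell)}^k x_j)$.

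\textbf{Step 1: factorization over neurons.} Condition on $X(k)=x$. By \eqref{eq:population-model}, each $X_i(k+1)$ is a function only of $x$ and the variables $\{W_{ij(\ell)}^k: j\in E_i^{\circ k}\cup I_i^{\circ k},\ \ell\in[a_{ij}^k]\}$. For distinct $i$ these families are disjoint. By (A6) they are jointly conditionally independent given $X(k)$, so the $X_i(k+1)$ are conditionally independent. This gives the analogue of \eqref{eq:conditional-config}:
\[
\textup{Pr}(X(k+1)=q\,|\,X(k)=x)=\prod_{i=1}^n \textup{Pr}(X_i(k+1)=q_i\,|\,X(k)=x).
\]
(A5) is needed only to justify conditioning on $X(k)$ alone, that is, the Markov property.

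\textbf{Step 2: the single-neuron probability.} Since $X_i(k+1)\in\{0,1\}$, we have $\textup{Pr}(X_i(k+1)=1\,|\,X(k)=x)=\mathbb{E}(X_i(k+1)\,|\,X(k)=x)$. The expectation display preceding the proposition already factors this over $j$ and $\ell$ using (A6); the product $(1-\prod\cdots)\prod\cdots$ splits because the excitatory and inhibitory index sets are disjoint. Each factor is evaluated as
\[
\mathbb{E}(1-W_{ij(\ell)}^k x_j\,|\,X(k)=x)=1-w_{ij}^k x_j.
\]
When $x_j=0$ this is trivially $1$. When $x_j=1$, conditioning on $X(k)$ versus on $X_j(k)=1$ needs care: I would use that $W_{ij(\ell)}^k$ depends on the state only through $X_j(k)$, as in the definition of $w_{ij}^k$ in Proposition~\ref{prop:cond-prob-infection}. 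Taking the product over the $a_{ij}^k$ identical factors gives $(1-w_{ij}^kx_j)^{a_{ij}^k}$, and hence $\rho_i(k+1)$ as in \eqref{eq:prob-configuration-population}.

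\textbf{Step 3: assembly.} Writing $\textup{Pr}(X_i(k+1)=q_i\,|\,X(k)=x)=q_i\rho_i(k+1)+(1-q_i)(1-\rho_i(k+1))$ for $q_i\in\{0,1\}$ and substituting into Step 1 completes the proof.

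The main obstacle is in Step 2, not in the algebra. It is justifying that the conditional success probability given the whole configuration $X(k)=x$ equals $w_{ij}^k$, and that it is common across $\ell$. I would state explicitly that this is the interpretation under which the proposition holds.
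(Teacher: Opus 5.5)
Your proposal is correct and follows essentially the same route as the paper. Like the paper, you factor the transition probability over neurons using the conditional independence in (A6), then evaluate each single-neuron firing probability from the dynamics \eqref{eq:population-model} through the factored conditional expectation. Your remarks that $w_{ij}^k$ must not depend on $\ell$, and that the success probability given the full configuration $X(k)=x$ must equal the one given $X_j(k)=1$, state assumptions the paper leaves implicit.
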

\begin{remark}
Compared to 	Proposition~\ref{prop:cond-prob-infection}, the difference lies in the representation of the probability $\rho_i(k+1) $ in \eqref{eq:prob-configuration-population} which now involves the number of neurotransmitters $a_{ij}^k$ with $ i, j \in [n]$ and $k\geq 0$. 

Such results that evaluate transition probabilities are  needed for computing optimal control solutions under the framework of Markov decision processes   \cite{ShuangPeterTransNNControl25, ShuangPeterTransNNCDC25}. 
\end{remark}

We introduce the following assumptions to further simplify the representation.
\begin{description}
	\item[\bf (A7)] (Neurotransmission Independence at  Step $k$)  At step $k\geq 0 $, $\{ W_{ij(\ell)}^k: i,j \in [n], \ell \in [a_{ij}]\}$ are  independent,  and  for each $i, j \in [n]$ and each $\ell \in [a_{ij}]$, $W_{ij(\ell)}^k$ is independent of  $\{ X_q(k): q\in [n],  q\neq j\}$.  
	\item[\bf (A8)] (State Independence among Neurons) {Upto some terminal step $T$, for each $k\leq T$, the underlying binary random variables $\{X_i(k):i\in [n] \}$ are independent.}
\end{description}

Under (A5), (A7) and (A8), taking the expectation on both sides of the equation \eqref{eq:population-model} above yields
\begin{equation}
\begin{aligned}
 \mathbb{E}	 X_i(k+1) =   & ~ \mathbb{E}	\Big(1- \prod_{j\in E_i^{ \circ k}} \prod_{\ell =1}^{a_{ij}^k} (1- W_{ij (\ell) }^{k} X_j(k)) \Big) \\
	& \times  \prod_{j \in I_i^{ \circ k}} \prod_{\ell =1}^{a_{ij}^k} (1- W_{ij (\ell)}^k X_j(k)) ).
\end{aligned}
\end{equation}
Denote  $p_i(k) \triangleq \textup{Pr}(X_i(k)=1)$. Let $w_{ij}^k\triangleq \operatorname{Pr}(W_{ij (\ell)}^k=1 {| X_j(k)=1})$ denote the conditional probability of the successful reception of each neurotransmitter  from node $j$ to node~$i$ at step $k$. 
Then the equation above is equivalent to
\[
\begin{aligned}
	 p_i(k+1) & =\Big(1 -\prod_{j\in E_i^{ \circ k}}(1- w_{ij}^k p_j(k))^{a_{ij}^k}\Big)\\
	&  \quad \times     \prod_{j\in I_i^{ \circ k}}(1- w_{ij}^k p_j(k)) )^{a_{ij}^k} .
\end{aligned} 
\]
With a slight abuse of notation, define 
\begin{align}
\label{eq:inhibit_pi_def_population}
	\pi_i(k) & \triangleq \prod_{j\in I_i^{\circ k} } (1- w_{ij}^k p_j(k-1))^{a_{ij}^k}   \in [0,1] \\
	s_i(k)  & \triangleq  \begin{cases}
	     - \log\left(1-\frac{p_i(k)}{\pi_i(k)}\right) , & \text{ if } \pi_i(k) \in (0,1]  \\
        0, & \text{ if } \pi_i(k) =0
	\end{cases} \label{eq:s-transform-pop} \\
	o_i(k)  &  \triangleq - \log \pi_i(k)  \in [0,+\infty]. \label{eq:o-transform-pop}
\end{align}	
with $o_i(k), s_i(k)\in[0, +\infty]$.

Following the same analysis  in the previous section, we obtain the following dynamics
\begin{align}
s_i(k+1)  & = \sum_{j\in E_i^{\circ k}} a_{ij}^k \Psi(w_{ij}^k e^{-o_j(k)}, s_j(k)), \quad i \in [n] \label{eq:s-evo-update-pop}\\
	o_i(k+1) & = \sum_{j\in I_i^{\circ k} } a_{ij}^k \Psi(w_{ij}^k e^{-o_j(k)}, s_j(k)), \quad i \in [n]\label{eq:o-evo-update-pop}
\end{align}
which explicitly include the numbers of neurotransmitters $\{a_{ij}^k: i, j \in [n], k\geq 0\}$ into the evolution dynamics.

The initial conditions can be given by $o_i(0)=0$ (if no inhibition exists before the starting time), and $s_i(0)  = - \log\left(1-{p_i(0)}\right) $, for all $i\in [n]$.

\begin{remark}
	The equation pair \eqref{eq:s-evo-update} and \eqref{eq:o-evo-update} can now be viewed as special cases of the equation pair \eqref{eq:s-evo-update-pop} and \eqref{eq:o-evo-update-pop} by setting $a_{ij}^k=1$ for all $i, j \in [n]$ and $k\geq 0$.
\end{remark}

\begin{remark}
    An important feature of the model in this paper is that the connection weights $a_{ij}^k$ and $w_{ij}^k$ are non-negative with natural interpretations, compared to neural networks that allow negative weights (see e.g. \cite{block1962perceptron,rumelhart1986learning}). 
    
Such dynamics in \eqref{eq:s-evo-update-pop} and \eqref{eq:o-evo-update-pop}   are related to graph neural networks \cite{scarselli2009graph}  where each node has a continuous  state (or nodal feature) of dimension two. The state $s_i$ (resp. $o_i$) summarizes the incoming influence of excitatory (resp. inhibitory) connections.
\end{remark}
\begin{remark}
An offset  in the dynamics \eqref{eq:s-evo-update-pop} and \eqref{eq:o-evo-update-pop} can be created  by introducing one node $n_0 \in [n]$ with no incoming links and with the outgoing transmission probability $w_{in_0}^k <1$,  for $i\in [n]$ and $k\geq 0$, and setting its state $s_{n^0}(k)$ to $\infty$.  
\end{remark}

Let $\odot$ denote Hadamard product and introduce 
$$A^k=[a_{ij}^k] ,\quad  \Omega^k=[w_{ij}^k], \quad M^{k} =  A^k \odot \Omega^k$$ 
which are  $n\times n$ matrices.
Then we  have the following upper bounds for the states of \eqref{eq:s-evo-update-pop} and \eqref{eq:o-evo-update-pop}.
\begin{proposition}[Upper Bound] \label{prop: upper-bound-s}
Let the initial states be given by ${s}(0) = [{s}_1(0), \cdots, {s}_n(0)]^\TRANS$ with ${s}_i(0)  = - \log\left(1-{p_i(0)}\right) $ and ${o}(0)= [{o}_1(0), \cdots, {o}_n(0)]^\TRANS$.
Then the states of \eqref{eq:s-evo-update-pop} and \eqref{eq:o-evo-update-pop} for  $k\geq 1$ satisfy that for all $i \in [n]$, 
\begin{align}
{s}_i(k)  & \leq  [\mathcal{T}_E(k,0)  {s}(0)]_i \\ {o}_i(k) &\leq   [ (B_I^{k-1} \odot M^{k-1} ) \mathcal{T}_E(k-1,0)   {s}(0)]_i
\end{align}
with $\mathcal{T}_E(k,0) \triangleq (B_E^{k-1}\odot M^{k-1}) \cdots (B_E^0 \odot M^0)$.
\end{proposition}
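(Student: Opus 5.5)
The plan is to use an elementwise linear upper bound on the TLogSigmoid activation and then induct on $k$. The key inequality is
\[
\Psi(w e^{-o}, s) \leq w e^{-o} s \leq w s, \qquad w\in[0,1],\ o\in[0,+\infty],\ s\in[0,+\infty].
\]
The first part, $\Psi(v,s)\le vs$ for $v\in[0,1]$, follows from concavity of $x\mapsto -\log(1-v+ve^{-x})$. That function vanishes at $x=0$ and has derivative $ve^{-x}/(1-v+ve^{-x})$, which equals $v$ at $x=0$, so the function lies below its tangent line $vx$. Equivalently, $\Psi(v,\cdot)$ is the log-moment-generating-type function $-\log \mathbb{E}e^{-xB}$ with $B\sim\mathrm{Bernoulli}(v)$, and Jensen's inequality gives $\le x\,\mathbb{E}B=vx$. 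The second part uses $e^{-o}\le 1$ because $o\ge 0$. The extended values must also be checked: if $s=+\infty$, the right-hand side is $+\infty$ (or $0\cdot\infty$ when $w=0$, where $\Psi(0,\cdot)=0$), and if $o=+\infty$, then $\Psi(0,s)=0$. Handling these extended-real cases, together with the convention $s_i=0$ when $\pi_i=0$, is the only delicate point, and I would treat it by a brief remark.

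Next I rewrite the dynamics in matrix form using the adjacency indicators. Summing over $j\in E_i^{\circ k}$ is the same as summing over all $j$ with weight $[B_E^k]_{ij}$, and similarly for $I_i^{\circ k}$ with $B_I^k$. Applying the key inequality to \eqref{eq:s-evo-update-pop} gives
\[
s_i(k+1) \le \sum_j [B_E^k]_{ij} a_{ij}^k w_{ij}^k s_j(k) = [(B_E^k\odot M^k) s(k)]_i .
\]
In the same way, \eqref{eq:o-evo-update-pop} gives
\[
o_i(k+1) \le [(B_I^k\odot M^k) s(k)]_i .
\]

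The proof then concludes by induction on $k$. The base case $k=1$ is exactly the first inequality with $k=0$. For the inductive step, the matrices $B_E^k\odot M^k$ have nonnegative entries, so they preserve the elementwise order $\le$ on $[0,+\infty]^n$. Hence $s(k)\le \mathcal{T}_E(k,0)s(0)$ implies
\[
s(k+1)\le (B_E^k\odot M^k)\,\mathcal{T}_E(k,0)s(0)=\mathcal{T}_E(k+1,0)s(0).
\]
For the $o$ bound, I apply the one-step $o$ inequality at step $k-1$, then the $s$ bound at step $k-1$, using monotonicity of the nonnegative matrix $B_I^{k-1}\odot M^{k-1}$. This yields
\[
o(k)\le (B_I^{k-1}\odot M^{k-1})\,\mathcal{T}_E(k-1,0)s(0),
\]
where $\mathcal{T}_E(0,0)=I$ covers the case $k=1$.
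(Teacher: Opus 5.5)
Your proof is correct and takes essentially the same route as the paper. The paper also uses the tangent-line bound $\Psi(w,x)\le wx$ from concavity in $x$ and the estimate $e^{-o_j(k)}\le 1$. It then compares with the linear system $z(k+1)=(B_E^k\odot M^k)z(k)$, $z(0)=s(0)$, and takes one more step through $B_I^{k}\odot M^{k}$ for the $o$ bound. Your explicit induction (nonnegative matrices preserve elementwise order) and your remarks on the extended-real cases and $\mathcal{T}_E(0,0)=I$ spell out steps the paper leaves implicit.
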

\begin{proof} By the concavity of $
\Psi(w, x) \triangleq -\log(1-w+we^{-x})
$ in $x$ (see \cite[Sec.~V]{ShuangPeterTransNN22}), we have for any $z ,z^* \in [-\infty, +\infty]$,
$$\Psi(w,z)\leq  \Psi(w,z^*) + \partial_x\Psi(w,z^*)(z-z^*), \quad w \in [0,1] .
$$ In particular, taking $z^*=0$ yields $\partial_x\Psi(w,0)=w$ and hence $\Psi(w,z) \leq w z$. 
   Applying this inequality to  \eqref{eq:s-evo-update-pop}  yields
    \begin{align*}
	{s}_{i}(k+1)
	& \leq   \sum_{j\in {E}_i^{\circ k}}  a_{ij}^k w_{ij}^k e^{-o_j(k)} {s}_j(k) \leq  \sum_{j\in {E}_i^{\circ k}}  a_{ij}^k w_{ij}^k {s}_j(k). 
\end{align*}
for $o_j(k)\in[0,+\infty]$. 
 That is, the state ${s}$ is element-wisely upper bounded  by the state of the  discrete-time linear  system
\[
 z(k+1) = (B_E^k \odot M^k)   z(k), \quad z(0) = {s}(0),~~ z(k) \in \BR^n,
\]
the solution of which is given by  $z(k ) = \mathcal{T}_E(k,0){s}(0)$. 
Thus, for all $k\geq 0$,
$
{s}_i(k) \leq z_i(k) =  [\mathcal{T}_E(k,0)  {s}(0)]_i. 
$
 In addition, from the dynamics \eqref{eq:o-evo-update-pop}  of the state ${o}$,  we obtain similarly
    \begin{align}
	{o}_{i}(k+1)
	\leq  \sum_{j\in {I}_i^{\circ k}}  a_{ij}^k w_{ij}^k {s}_j(k), \quad \forall i \in [n]. 
\end{align}
Therefore,
\[
\begin{aligned}
\bar{o}_i(k+1) \leq [ (B_I^{k} \odot M^{k})   \bar{s}(k)]_i \leq [ (B_I^{k} \odot M^{k})   z(k)]_i   .
\end{aligned}
\] 
Replacing $z(k)$ by its solution $\mathcal{T}(k,0){s}(0)$ and shifting the time index from $k+1$ to $k$ yield the desired result.
\end{proof}

\section{Limit Model with Infinite Neurotransmitters}
 \subsection{Limit Model via Poisson  Approximation}
Since the number of released neurotransmitter molecules could be very large (see e.g. \cite[Part III, Chp. 11]{kandel2021principles}), the number of receptors at the post-synaptic neurons are relatively moderate and are assumed not to change over short period of time,  the probability of a successful transmission $w_{ij}$ of each neurotransmitter from node $j$ to node $i$ may decrease with respect to the number of neurotransmitters $a_{ij}$. 
Hence  we introduce the following assumption.
\begin{description}
	\item[\bf (A9)]  
	The probability of transmission $w_{ij}^k$ depends on the number of transmissions $a_{ij}^k$ as follows:
\begin{equation}\label{eq:prob-scaling}
	w_{ij}^k = \frac{\lambda_{ij}^k}{a_{ij}^k},\quad \forall k\geq 0, ~ \forall i,j \in[n]
\end{equation}
where $\lambda_{ij}^k$ is fixed. 
\end{description}

\begin{remark}[Poisson Approximation] Consider $n$ independent Bernoulli random variables $Z_1,\cdots, Z_n$, each of which with probability  $\frac{\lambda}{n}$ being $1$. Then the sum $S = \sum_{i=1}^n Z_i$ follows a binomial distribution and  hence can be approximated by Poisson distribution with rate $\lambda$.
Then 
$
\text{Pr}\left(\prod_{i=1}^n (1-Z_i) =1\right) =  \text{Pr}(S=0 ) \approx e^{- \lambda}. 
 $
\end{remark}

Following the idea of Poisson approximation of binomial distributions,  if (A9) holds, 
\[
\text{Pr}(W_{ij(q)}^k X_j(k) =1) = w_{ij}^k p_j(k) = \frac{\lambda_{ij}^kp_j(k)
}{a_{ij}^k} W_{ij(q)}^k.
\]
Applying the Poisson approximation yields
\begin{equation} \label{eq:poisson-approx}
	\text{Pr}\left(\prod_{q=1}^{a_{ij}^k} \Big(1- W_{ij(q)}^k X_j(k)\Big)=1\right) \approx e^{{-\lambda_{ij}^k} p_j(k)}.
\end{equation}
Under (A5), (A7), (A8) and (A9), with the Poisson approximation, the expected state then satisfies
\begin{equation}\label{eq:agent-based-population-poisson}
\begin{aligned}
\mathbb{E}X_i(k+1)  
&= 	\Big(1- \prod_{j\in {E}_i^{\circ k}} \mathbb{E}\prod_{q=1}^{a_{ij}^k} \big(1- W_{ij(q)}^k X_j(k)\big)\Big)\\
	& \qquad \times  \prod_{j \in I_i^{ \circ k}} \mathbb{E}  \prod_{\ell =1}^{a_{ij}^k} (1- W_{ij (\ell)}^k X_j(k)) )\\
& ~ \approx ~ \Big(1- \prod_{j\in {E}_i^{\circ k}} e^{{-\lambda_{ij}^k} p_j(k)}\Big)\times \prod_{j\in {I}_i^{\circ k}} e^{{-\lambda_{ij}^k} p_j(k)}, 
\end{aligned}
\end{equation}
that is 
\begin{equation}\label{eq:dyn_prob_poisson}
\begin{aligned}
p_i(k+1)  
& ~ \approx ~ \Big(1- \prod_{j\in {E}_i^{\circ k}} e^{{-\lambda_{ij}^k} p_j(k)}\Big)  \prod_{j\in {I}_i^{\circ k}} e^{{-\lambda_{ij}^k} p_j(k)}
\end{aligned}
\end{equation}
for all $i \in [n]$,
where $\lambda_{ij}^k = w_{ij}^ka_{ij}^k$ is the  rate for Poisson distribution at time $k$ for the synaptic connection from neuron $j$ to neuron~$i$.  Heuristically, this approximation works well when $a_{ij}^k$ is large, $ w_{ij}^k$ is small, and $ \lambda_{ij}^k = a_{ij}^k w_{ij}^k$ is moderate. 
 
To simplify the representation of the dynamics \eqref{eq:dyn_prob_poisson}, let  
\begin{equation} \label{eq:pi-bar-pop}
	\bar{\pi}_i(k) \triangleq \prod_{j\in {I}_i^{\circ k}} e^{{-\lambda_{ij}^{k-1}} p_j(k-1)}  \in (0,1]
\end{equation} 
 represent the probability of \textcolor{black}{no  inhibition} at node $i$ from its neighbouring neurons in the previous step. 
Let 
\begin{align}
	\bar{s}_i(k)  &  \triangleq - \log\left(1-\frac{p_i(k)}{\bar{\pi}_i(k)}\right) \in[0, +\infty] \label{eq:s-transform-pop} \\
	\bar{o}_i(k)  &  \triangleq - \log \bar{\pi}_i(k)  \in [0,+\infty). \label{eq:o-transform-pop}
\end{align}	
From \eqref{eq:dyn_prob_poisson}, we obtain the dynamics for $(\bar{s}, \bar{o})$, given by
\begin{align}
	\bar{s}_{i}(k+1) 
	& =   \sum_{j\in {E}_i^{\circ k}}  \lambda_{ij}^k e^{-\bar{o}_j(k)} (1-e^{-\bar{s}_j(k)}) \label{eq:sbar}\\
		\bar{o}_i(k+1) & =  \sum_{j \in I_i^{\circ k}}  \lambda_{ij}^k e^{-\bar{o}_j(k)} (1-e^{-\bar{s}_j(k)}).
\label{eq:obar}
\end{align}
 The initial conditions are given by $\bar{o}_i(0)=0$ (representing no inhibitions before the first step) and $\bar{s}_i(0)  = - \log\left(1-{p_i(0)}\right) $, for all $i\in [n]$.
\begin{remark}
    \textcolor{black}{We note that $\bar{\pi}_i(k)$ cannot be zero following its definition in \eqref{eq:pi-bar-pop}, since $\lambda_{ij}^k$ is assumed to be finite, and  $p_{j}(k-1) \in [0,1]$. Therefore, we excluded $+\infty$ in \eqref{eq:o-transform-pop}.  }  
\end{remark}
\begin{proposition}\label{prop:limit-os-bar}
    
 Assume (A5), (A7), (A8) and (A9) hold.  Then the limit model for the probability of excitation for the dynamics \eqref{eq:population-model}, when $a_{ij}^k \to \infty$ for all $i,j\in [n]$ and $k\in \{0,\cdots, T-1\}$,  is  given by \eqref{eq:sbar} and \eqref{eq:obar} with $$\textup{Pr}(X_i(k)=1) = e^{-\bar{o}_j(k)} (1-e^{-\bar{s}_j(k)}),$$
 where $k\in \{0,\cdots, T-1\}$.
\end{proposition}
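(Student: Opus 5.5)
The plan is to start from the exact finite-population recursion for $p_i(k)=\textup{Pr}(X_i(k)=1)$ derived above under (A5), (A7), (A8):
\[
p_i(k+1)=\Big(1-\prod_{j\in E_i^{\circ k}}(1-w_{ij}^kp_j(k))^{a_{ij}^k}\Big)\prod_{j\in I_i^{\circ k}}(1-w_{ij}^kp_j(k))^{a_{ij}^k}.
\]
Substitute the scaling (A9), $w_{ij}^k=\lambda_{ij}^k/a_{ij}^k$, and let $a_{ij}^k\to\infty$ for every link and every $k\in\{0,\dots,T-1\}$. Then argue by induction on $k$ that the finite-$a$ probabilities $p_i^{(a)}(k)$ converge to limits $p_i(k)$ satisfying the exact version of \eqref{eq:dyn_prob_poisson}. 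The base case is trivial, since $p_i(0)$ does not depend on $a$.

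The key analytic step is a uniform version of $(1-\lambda x/a)^a\to e^{-\lambda x}$. For $x\in[0,1]$ and $a\geq\lambda$, the bound
\[
|(1-\lambda x/a)^a-e^{-\lambda x}|\le \lambda^2/a
\]
(or any $O(1/a)$ bound from $\log(1-u)=-u+O(u^2)$) gives convergence uniform in $x$. Combining this uniform convergence with the continuity of $x\mapsto e^{-\lambda x}$ and the induction hypothesis $p_j^{(a)}(k)\to p_j(k)$ yields
\[
(1-\lambda_{ij}^k p_j^{(a)}(k)/a_{ij}^k)^{a_{ij}^k}\to e^{-\lambda_{ij}^kp_j(k)}.
\]
The right-hand side of the recursion is a finite product and composition of these factors, so it passes to the limit. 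This gives
\[
p_i(k+1)=\Big(1-\prod_{j\in E_i^{\circ k}}e^{-\lambda_{ij}^kp_j(k)}\Big)\prod_{j\in I_i^{\circ k}}e^{-\lambda_{ij}^kp_j(k)},
\]
with equality in place of $\approx$. The main obstacle I expect is this uniformity: because $p_j^{(a)}(k)$ itself depends on $a$, pointwise convergence for fixed $x$ is not enough. The explicit $O(1/a)$ bound over $x\in[0,1]$ settles it, and the induction only runs over finitely many steps $k\le T-1$ and finitely many links.

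Finally, translate the limit recursion into the $(\bar s,\bar o)$ coordinates. Define $\bar\pi_i(k)$ by \eqref{eq:pi-bar-pop} with the limit $p_j(k-1)$; it lies in $(0,1]$ since $\lambda_{ij}^k<\infty$, so $\bar o_i(k)=-\log\bar\pi_i(k)$ is finite. Taking $-\log$ of \eqref{eq:pi-bar-pop} gives
\[
\bar o_i(k+1)=\sum_{j\in I_i^{\circ k}}\lambda_{ij}^kp_j(k).
\]
Dividing the limit recursion by $\bar\pi_i(k+1)$ and taking $-\log$ of $1-p_i(k+1)/\bar\pi_i(k+1)$ gives
\[
\bar s_i(k+1)=\sum_{j\in E_i^{\circ k}}\lambda_{ij}^kp_j(k).
\]
Inverting the transforms (as in the earlier remark on inverse mappings) gives $p_j(k)=e^{-\bar o_j(k)}(1-e^{-\bar s_j(k)})$. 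Substituting this yields exactly \eqref{eq:sbar} and \eqref{eq:obar}, together with the stated formula for $\textup{Pr}(X_i(k)=1)$ (with index $i$). For the initial conditions, $\bar o_i(0)=0$ and $\bar s_i(0)=-\log(1-p_i(0))$ are consistent with $\bar\pi_i(0)=1$.
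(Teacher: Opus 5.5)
Your plan follows the paper's route. Both pass to the limit in each factor $(1-w_{ij}^kp_j(k))^{a_{ij}^k}$ with $w_{ij}^k=\lambda_{ij}^k/a_{ij}^k$ to get $e^{-\lambda_{ij}^kp_j(k)}$, then take $-\log$ to reach \eqref{eq:sbar}--\eqref{eq:obar}. Your uniform bound $|(1-\lambda x/a)^a-e^{-\lambda x}|\le\lambda^2/a$ on $x\in[0,1]$, combined with induction over $k\le T-1$, improves on the paper's one-line limit. That limit silently treats $p_j(k)$ as independent of the transmitter numbers at earlier steps.

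There is, however, a genuine gap at your very first step, and the paper's proof has the same one: the equality $\textup{Pr}\big(\prod_{q}(1-W_{ij(q)}^kX_j(k))=1\big)=(1-w_{ij}^kp_j(k))^{a_{ij}^k}$. The finite-population recursion you start from does not follow from \eqref{eq:population-model}. All factors $1-W_{ij(\ell)}^kX_j(k)$, $\ell\in[a_{ij}^k]$, contain the same $X_j(k)$, so they are not independent, and the expectation cannot be moved inside the power. Conditioning on $X_j(k)$, with the coins conditionally independent given $X(k)$, gives
\[
\mathbb{E}\prod_{\ell=1}^{a_{ij}^k}\big(1-W_{ij(\ell)}^kX_j(k)\big)=\mathbb{E}\big[(1-w_{ij}^kX_j(k))^{a_{ij}^k}\big]=1-p_j(k)\big(1-(1-w_{ij}^k)^{a_{ij}^k}\big).
\]
Here is a concrete check. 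Take a single excitatory link $2\to1$ with the $W_{12(\ell)}^0$ i.i.d.\ coins independent of everything else, so (A5)--(A8) hold. Set $w_{12}^0=1$, $a_{12}^0=2$ and $p_2(0)=\tfrac12$. Then $X_1(1)=X_2(0)$, so $p_1(1)=\tfrac12$, while the recursion gives $\tfrac34$.

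Under (A9) each factor actually tends to $1-p_j(k)(1-e^{-\lambda_{ij}^k})$. By strict convexity of $x\mapsto e^{-\lambda x}$, this exceeds $e^{-\lambda_{ij}^kp_j(k)}$ whenever $\lambda_{ij}^k>0$ and $p_j(k)\in(0,1)$. In the same example with $w_{12}^0=\lambda_{12}^0/a_{12}^0$ and $a_{12}^0\to\infty$, one gets $p_1(1)\to\tfrac12(1-e^{-\lambda_{12}^0})$. This differs from the value $1-e^{-\lambda_{12}^0/2}$ produced by \eqref{eq:sbar}--\eqref{eq:obar}. So, under the natural reading, the limit model is \eqref{eq:prob-update-with-inhibition} with $w_{ij}^k$ replaced by $1-e^{-\lambda_{ij}^k}$. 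Your uniformity argument is correct, but it is applied to the wrong recursion.

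To prove the proposition as stated, you need an extra modelling assumption. It must make the events $\{W_{ij(\ell)}^kX_j(k)=1\}$, $\ell\in[a_{ij}^k]$, independent with probability $w_{ij}^kp_j(k)$ each, for example by driving each transmitter with its own independent realisation of presynaptic firing. This is exactly what the Poisson-approximation heuristic tacitly assumes.
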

\begin{proof}
Under (A5), (A7), (A8) and (A9),  
\[
\begin{aligned}
& \lim_{a_{ij}^k\to \infty} \mathbb{E}\prod_{q=1}^{a_{ij}^k} \big(1- W_{ij(q)}^k X_j(k)\big) \\
&=\lim_{a_{ij}^k\to \infty} \text{Pr} \Bigg(\prod_{q=1}^{a_{ij}^k} \Big(1- W_{ij(q)}^k X_j(k)\Big)=1\Bigg) \\
 &= \lim_{a_{ij}^k\to \infty} \left( 1- {w_{ij}^k} p_j(k)\right)^{a_{ij}^k}\\ & = \lim_{a_{ij}^k\to \infty} \left( 1- \frac{\lambda_{ij}^k}{a_{ij}^k} p_j(k)\right)^{a_{ij}^k} 
= e^{{-\lambda_{ij}^k} p_j(k)}.
\end{aligned}
\]
that is, the  equality in \eqref{eq:poisson-approx} is  exact when the number of neurotransmitters goes to infinity at each link.
The rest of the proof follows by replacing the approximate equalities in \eqref{eq:agent-based-population-poisson} and \eqref{eq:dyn_prob_poisson} by  exact equalities.
\end{proof}

To facilitate further analysis, we introduce the element-wise activation function $\phi: \BR^n \times \BR^n \to \BR^n$ defined  below
$$
\phi (s, o) \triangleq [\sigma(s_1, o_1), \cdots, \sigma(s_n, o_n)]^\TRANS \in \BR^n, \quad \forall s, o \in \BR^n
$$ with 
$
\sigma (s_i, o_i) \triangleq  e^{-o_i} (1-e^{-s_i}) 
$ for any $s_i, o_i\in \BR$.
Let   $\bar{s}(k ) = [\bar{s}_1(k),\cdots, \bar{s}_n(k)]^\TRANS$ and  $\bar{o}(k ) = [\bar{o}_1(k),\cdots, \bar{o}_n(k)]^\TRANS$. 
Then the dynamics in \eqref{eq:sbar} and \eqref{eq:obar} can be  represented in a compact form below
\begin{equation}\label{eq:compact-barso}
    \MATRIX{\bar{s}(k+1)\\\bar{o}(k+1)}= \MATRIX{B_E^k \odot \Lambda^k \\  B_I^k \odot\Lambda^k } \phi(\bar{s}(k), \bar{o}(k))
\end{equation}
where $\Lambda^k \triangleq [\lambda_{ij}^k] \in \BR^{n\times n}$ and $\odot$ denotes Hadamard product. 
 We note that the diagonal elements of binary-valued adjacency matrices $B_E^k$ and $B_I^k$ could be non-zero due to the existence of self-loops (representing autapses).

Let $p(k)\triangleq [p_1(k),\cdots, p_n(k)]^\TRANS$. For any vector $v\in \BR^n$, its exponential function is defined by $e^{v} = [e^{v_1},\cdots, e^{v_n}]^\TRANS$. 
\begin{proposition}
 Assume (A5), (A7), (A8) and (A9) hold.  Then the limit model for the probability of excitation for the dynamics \eqref{eq:population-model}, when $a_{ij}^k \to \infty$ for all $i,j\in [n]$ and  $k\in \{0,\cdots, T-1\}$,  is  given by     \begin{equation} \label{eq:prob-vec-evo}
    p(k+1) = (1- e^{-(B_E^k \odot \Lambda^k) p(k)} )\odot e^{-(B_I^k \odot \Lambda^k) p(k)}
\end{equation}  with  $\textup{Pr}(X_i(k)=1) =p_i(k)$ for all $i\in [n]$, where  $k\in \{0,\cdots, T-1\}$.
\end{proposition}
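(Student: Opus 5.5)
The plan is to build directly on Proposition~\ref{prop:limit-os-bar}, which already gives the limit of the per-link factor
\[
\lim_{a_{ij}^k\to\infty}\mathbb{E}\prod_{q=1}^{a_{ij}^k}\big(1-W_{ij(q)}^kX_j(k)\big)=e^{-\lambda_{ij}^k p_j(k)}.
\]
By (A5), (A7), (A8), the expectation of the right-hand side of \eqref{eq:population-model} factorizes over the incoming neighbours $j$. Each factor is a finite product of per-link expectations, and each of these converges. So the limit of $\mathbb{E}X_i(k+1)$ is obtained factor by factor, since finite products and the map $x\mapsto 1-x$ are continuous. This is exactly the exact version of \eqref{eq:dyn_prob_poisson}:
\[
p_i(k+1)=\Big(1-\prod_{j\in E_i^{\circ k}}e^{-\lambda_{ij}^kp_j(k)}\Big)\prod_{j\in I_i^{\circ k}}e^{-\lambda_{ij}^kp_j(k)}.
\]

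Next I rewrite the products as exponentials of sums:
\[
\prod_{j\in E_i^{\circ k}}e^{-\lambda_{ij}^kp_j(k)}=\exp\Big(-\sum_{j=1}^n [B_E^k]_{ij}\lambda_{ij}^k p_j(k)\Big)=e^{-[(B_E^k\odot\Lambda^k)p(k)]_i}.
\]
This uses that $[B_E^k]_{ij}=1$ exactly when $j\in E_i^{\circ k}$, and $0$ otherwise. The same argument with $B_I^k$ and $I_i^{\circ k}$ handles the inhibitory product. Stacking over $i\in[n]$, with the componentwise exponential defined just before the statement and the Hadamard product, gives \eqref{eq:prob-vec-evo}.

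As a consistency check, I would also note that \eqref{eq:prob-vec-evo} agrees with the $(\bar s,\bar o)$ representation \eqref{eq:compact-barso}. Setting $p(k)=\phi(\bar s(k),\bar o(k))$ gives $\bar s(k+1)=(B_E^k\odot\Lambda^k)p(k)$ and $\bar o(k+1)=(B_I^k\odot\Lambda^k)p(k)$. Then $p(k+1)=e^{-\bar o(k+1)}\odot(1-e^{-\bar s(k+1)})$, which is the same formula.

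The only delicate point is justifying the interchange of limit and recursion across time steps. The recursion is applied for $k=0,\dots,T-1$, and $p_j(k)$ is itself a limit from earlier steps. I would handle this by induction on $k$. Assume $p(k)$ converges to the limit-model value. The map from $p(k)$ to $p(k+1)$ is continuous, and $(1-\lambda p/a)^a\to e^{-\lambda p}$ uniformly for $p\in[0,1]$. Together these give convergence of $p(k+1)$. Independence (A8) up to step $T$ ensures the factorization remains valid at each step.
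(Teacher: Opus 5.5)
Your route is the paper's: its proof also imports the per-link limit from Proposition~\ref{prop:limit-os-bar}, turns \eqref{eq:dyn_prob_poisson} into an equality, and rewrites the products over $E_i^{\circ k}$ and $I_i^{\circ k}$ using $B_E^k\odot\Lambda^k$ and $B_I^k\odot\Lambda^k$. Your induction in $k$ with uniform convergence is more careful than anything the paper writes. \textbf{The gap is in the imported step, and the paper shares it:} the claim that each per-link factor converges to $e^{-\lambda_{ij}^kp_j(k)}$ is false. All $a_{ij}^k$ factors in $\prod_{\ell}(1-W_{ij(\ell)}^kX_j(k))$ contain the \emph{same} binary variable $X_j(k)$. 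So the variables $W_{ij(\ell)}^kX_j(k)$, $\ell\in[a_{ij}^k]$, are not independent, and the Poisson heuristic with rate $\lambda_{ij}^kp_j(k)$ does not apply. Take the $W_{ij(\ell)}^k$ to be i.i.d.\ coin flips independent of all states; this satisfies (A5), (A7)--(A9). Conditioning on $X_j(k)$ then gives
\[
\mathbb{E}\prod_{\ell=1}^{a_{ij}^k}\big(1-W_{ij(\ell)}^kX_j(k)\big)=1-p_j(k)+p_j(k)\big(1-w_{ij}^k\big)^{a_{ij}^k}\;\longrightarrow\;1-p_j(k)\big(1-e^{-\lambda_{ij}^k}\big).
\]
By strict convexity of $p\mapsto e^{-\lambda p}$, this is strictly larger than $e^{-\lambda_{ij}^kp_j(k)}$ whenever $p_j(k)\in(0,1)$ and $\lambda_{ij}^k>0$. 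The finite-population recursion for $p_i(k+1)$ in Section~III, with factors $(1-w_{ij}^kp_j(k))^{a_{ij}^k}$, makes the same error. It passes the expectation over $X_j(k)$ through the convex map $x\mapsto(1-w_{ij}^kx)^{a_{ij}^k}$.

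A two-neuron example shows the statement itself fails. Take one excitatory link $2\to1$, $T=1$, $\Pr(X_2(0)=1)=\tfrac12$ and $\lambda_{12}^0=1$. Then $X_1(1)=1$ iff $X_2(0)=1$ and some reception succeeds, so $\Pr(X_1(1)=1)=\tfrac12\big(1-(1-\tfrac1a)^a\big)\to\tfrac12(1-e^{-1})\approx0.316$. But \eqref{eq:prob-vec-evo} predicts $1-e^{-1/2}\approx0.393$.

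The rest of your argument is sound: factorization over the disjoint sets $E_i^{\circ k}$ and $I_i^{\circ k}$, continuity, and the induction over $k$. With the correct per-link limit (valid, e.g., when the receptions are conditionally independent given $X_j(k)$), it proves
\[
p_i(k+1)=\Big(1-\prod_{j\in E_i^{\circ k}}\big(1-(1-e^{-\lambda_{ij}^k})\,p_j(k)\big)\Big)\prod_{j\in I_i^{\circ k}}\big(1-(1-e^{-\lambda_{ij}^k})\,p_j(k)\big).
\]
This is the model \eqref{eq:prob-update-with-inhibition} with effective transmission probabilities $1-e^{-\lambda_{ij}^k}$, not \eqref{eq:prob-vec-evo}. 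To get \eqref{eq:prob-vec-evo} you would need an extra modelling assumption making the gated receptions $W_{ij(\ell)}^kX_j(k)$ mutually independent across $\ell$, for instance each reception seeing its own independent copy of the presynaptic state. (A7) and (A8) do not provide this.
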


\begin{proof}
    We follow the same proof steps in Prop.~\ref{prop:limit-os-bar} to establish an exact equality in \eqref{eq:dyn_prob_poisson}, for which the  compact representation is equivalently given by \eqref{eq:prob-vec-evo}.
\end{proof}
 A trivial equilibrium point for \eqref{eq:prob-vec-evo} is $p^* = [0,\cdots, 0]^\TRANS \in \BR^{n}$.

\subsection{Contraction and Stability Properties}

\begin{proposition}[Contraction] \label{prop:contraction}
Let $p = 1 \text{ or } \infty$.
     If 
\begin{equation} \label{eq:contraction}
\left\|\MATRIX{B_E^k \odot \Lambda^k \\  B_I^k \odot\Lambda^k }  \right\|_p < 1, \quad \forall k \geq 0
\end{equation}
 holds, 
     the system in \eqref{eq:sbar} and \eqref{eq:obar} is contracting, that is, 
\begin{equation} \label{eq:contraction-per-step}
    \left\| \MATRIX{\bar{s}(k+1) \\ \bar{o}(k+1)} -  \MATRIX{\bar{s}^*(k+1) \\ \bar{o}^*(k+1)} \right \|_p <  \left\| \MATRIX{\bar{s}(k) \\ \bar{o}(k)} -  \MATRIX{\bar{s}^*(k) \\ \bar{o}^*(k)} \right \|_p
\end{equation}
and
\begin{equation} \label{eq:contraction-from-start}
    \left\| \MATRIX{\bar{s}(k) \\ \bar{o}(k)} -  \MATRIX{\bar{s}^*(k) \\ \bar{o}^*(k)} \right \|_p <  \left\| \MATRIX{\bar{s}(0) \\ \bar{o}(0)} -  \MATRIX{\bar{s}^*(0) \\ \bar{o}^*(0)} \right \|_p
\end{equation}
where $\MATRIX{\bar{s}(k) \\ \bar{o}(k)} $ (resp. $\MATRIX{\bar{s}^*(k) \\ \bar{o}^*(k)}$ ) denotes the state at step $k$ for the  trajectory with the initial value $\MATRIX{\bar{s}(0) \\ \bar{o}(0)} $ (resp. $\MATRIX{\bar{s}^*(0) \\ \bar{o}^*(0)}$).
\end{proposition}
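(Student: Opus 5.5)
The plan is to treat the compact form \eqref{eq:compact-barso} as $x(k+1) = G^k \phi(x(k))$. Here $x(k) \triangleq [\bar{s}(k)^\TRANS, \bar{o}(k)^\TRANS]^\TRANS \in \BR^{2n}$ and $G^k \triangleq \MATRIX{B_E^k \odot \Lambda^k \\ B_I^k \odot\Lambda^k } \in \BR^{2n\times n}$. The argument splits the one-step error into a Lipschitz bound for the nonlinearity $\phi$ followed by the operator-norm bound of $G^k$. For two trajectories $x(k)$ and $x^*(k)$ I will write $\Delta x(k) = x(k)-x^*(k)$, and $\Delta s_i$, $\Delta o_i$ for its components. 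Then
\[
\Delta x(k+1) = G^k\big(\phi(\bar{s}(k),\bar{o}(k)) - \phi(\bar{s}^*(k),\bar{o}^*(k))\big),
\]
so it suffices to bound the bracket by $\Delta x(k)$ in the chosen norm.

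\textbf{Step 1 (scalar Lipschitz estimate).} On the domain $s_i\in[0,+\infty)$, $o_i\in[0,+\infty)$ reached by the dynamics, I will compute
\[
\partial_{s}\sigma = e^{-o}e^{-s}\in[0,1], \qquad \partial_{o}\sigma = -e^{-o}(1-e^{-s})\in[-1,0].
\]
Hence $|\partial_s\sigma| + |\partial_o\sigma| = e^{-o} \le 1$. By the mean value theorem along the segment between $(s_i,o_i)$ and $(s_i^*,o_i^*)$, which stays in the convex domain, this gives
\[
|\sigma(s_i,o_i)-\sigma(s_i^*,o_i^*)| \le \max(|\Delta s_i|,|\Delta o_i|) \le |\Delta s_i|+|\Delta o_i|.
\]
Consequently $\|\Delta\phi\|_\infty \le \|\Delta x\|_\infty$ (vector $\infty$-norm) and $\|\Delta \phi\|_1 \le \sum_i(|\Delta s_i|+|\Delta o_i|) = \|\Delta x\|_1$ (vector $1$-norm).

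\textbf{Step 2 (matrix bound).} Next I will combine Step 1 with $\|G^k v\|_\infty \le (\max \text{row sum of } |G^k|)\,\|v\|_\infty$ and $\|G^k v\|_1 \le (\max\text{column sum of } |G^k|)\,\|v\|_1$. These are the two matrix quantities in the Notation paragraph, so I would match each vector norm to the appropriate one of them, since the paper's labeling of $\|W\|_1$ and $\|W\|_\infty$ is swapped relative to the induced-norm convention. The hypothesis \eqref{eq:contraction} then yields
\[
\|\Delta x(k+1)\|_p \le \|G^k\|\,\|\Delta x(k)\|_p,
\]
with $\|G^k\| < 1$. Whenever $\Delta x(k)\neq 0$ this is strict, which is \eqref{eq:contraction-per-step}. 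Iterating from $0$ to $k$ gives \eqref{eq:contraction-from-start} for $k\ge 1$; in fact it gives the stronger geometric bound $\prod_{t<k}\|G^t\|$ when these norms are uniformly bounded away from 1.

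\textbf{Expected obstacles.} The main delicate points are bookkeeping rather than analysis. First, one must pin down which matrix norm corresponds to which vector norm, given the paper's notation. Second, the strict inequalities require distinct trajectories: they are trivially false when $\Delta x(k)=0$, so I would state them for distinct initial conditions and note that a vanishing difference persists. Third, $\bar{s}_i(0)$ may equal $+\infty$ when $p_i(0)=1$. I would handle this by noting that from step $1$ on all states are finite, since $\lambda_{ij}^k$ is finite and $\sigma\le1$. At step $0$ I would assume the initial difference is finite, or interpret $\sigma(+\infty,o)=e^{-o}$, so that the Lipschitz bound still holds by continuity.
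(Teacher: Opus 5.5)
Your proposal is correct and is essentially the paper's argument. The paper bounds the Jacobian $J^k = G^k\,[\partial_s\phi\ \ \partial_o\phi]$ via submultiplicativity, using $\|[\partial_s\phi\ \ \partial_o\phi]\|_p \le \max_i e^{-o_i}\le 1$, and then applies the integral mean value theorem to the composite map; this is the same as your factoring out the linear map $G^k$ and using the unit Lipschitz bound of $\phi$ in the $1$- and $\infty$-norms.

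Your caveats are valid refinements that the paper glosses over. The matrix norms in \eqref{eq:contraction} must be read as those induced by the vector $p$-norms, as the paper's proof tacitly does. The strict inequalities also require $\Delta x(k)\neq 0$, and \eqref{eq:contraction-from-start} holds only for $k\ge 1$.
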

See Appendix for the proof. 

\begin{proposition}[Upper Bound] \label{prop: upper-bound-bars}
Let the initial states be given by $\bar{s}(0) = [\bar{s}_1(0), \cdots, \bar{s}_n(0)]^\TRANS$ with $\bar{s}_i(0)  = - \log\left(1-{p_i(0)}\right) $ and $\bar{o}(0)= [\bar{o}_1(0), \cdots, \bar{o}_n(0)]^\TRANS$. Then the states of the system \eqref{eq:sbar} and \eqref{eq:obar} for any step  $k\geq 1$ satisfy that for all $i \in [n]$, 
\begin{align}
\bar{s}_i(k)  & \leq  [\Gamma_E(k,0)  \bar{s}(0)]_i \\ \bar{o}_i(k) &\leq   [ (B_I^{k-1} \odot \Lambda^{k-1}) \Gamma_E(k-1,0)   \bar{s}(0)]_i
\end{align}
with $\Gamma_E(k,0) \triangleq (B_E^{k-1} \odot \Lambda^{k-1}) \cdots (B_E^0 \odot \Lambda^0)$. 
\end{proposition}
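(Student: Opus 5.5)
The plan mirrors the proof of Prop.~\ref{prop: upper-bound-s}, with the TLogSigmoid bound replaced by an elementary inequality for the activation $\sigma$. The key pointwise fact is that for $\bar{s}_j \in [0,+\infty]$ and $\bar{o}_j\in[0,+\infty)$,
\[
\sigma(\bar{s}_j,\bar{o}_j) = e^{-\bar{o}_j}(1-e^{-\bar{s}_j}) \leq 1-e^{-\bar{s}_j} \leq \bar{s}_j .
\]
The first inequality holds because $e^{-\bar{o}_j}\in(0,1]$. The second is the tangent-line bound $1-e^{-x}\le x$, which follows from the concavity of $x\mapsto 1-e^{-x}$ with slope $1$ at $x=0$; it holds trivially when $\bar{s}_j=+\infty$.

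Since all $\lambda_{ij}^k\geq 0$, I substitute this bound into \eqref{eq:sbar}. This gives $\bar{s}_i(k+1)\le \sum_{j\in E_i^{\circ k}}\lambda_{ij}^k \bar{s}_j(k) = [(B_E^k\odot\Lambda^k)\bar{s}(k)]_i$. Next I compare with the linear system $z(k+1)=(B_E^k\odot\Lambda^k)z(k)$, $z(0)=\bar{s}(0)$, and prove $\bar{s}(k)\le z(k)$ elementwise by induction on $k$. The induction step uses that $B_E^k\odot\Lambda^k$ has nonnegative entries, so multiplication by it preserves the elementwise order: $\bar{s}(k+1)\le (B_E^k\odot\Lambda^k)\bar{s}(k)\le (B_E^k\odot\Lambda^k)z(k)=z(k+1)$. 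Unrolling the recursion gives $z(k)=\Gamma_E(k,0)\bar{s}(0)$, which is the first claimed inequality.

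For the second inequality I apply the same pointwise bound to \eqref{eq:obar}. This gives $\bar{o}_i(k+1)\le[(B_I^k\odot\Lambda^k)\bar{s}(k)]_i$. By nonnegativity of $B_I^k\odot\Lambda^k$ and the first part, this is at most $[(B_I^k\odot\Lambda^k)\Gamma_E(k,0)\bar{s}(0)]_i$. Shifting the index from $k+1$ to $k$, for $k\ge1$, yields the claim.

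There is no real obstacle here. The only care needed is with extended values: if some $p_i(0)=1$, then $\bar{s}(0)$ has $+\infty$ entries. In that case I interpret the products with the convention $0\cdot\infty=0$, and the inequalities then hold trivially or by monotonicity in $[0,+\infty]$.
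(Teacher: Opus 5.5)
Your proposal is correct and follows essentially the same route as the paper. The paper bounds $e^{-\bar{o}_j}(1-e^{-\bar{s}_j})\le \bar{s}_j$ via $e^{-\bar{o}_j}\le 1$ and $1-e^{-x}\le x$, then compares with the linear system $z(k+1)=(B_E^k\odot\Lambda^k)z(k)$ exactly as in Prop.~\ref{prop: upper-bound-s}; your explicit induction using the nonnegativity of $B_E^k\odot\Lambda^k$ and $B_I^k\odot\Lambda^k$, and your handling of $+\infty$ entries, simply make precise the steps the paper leaves implicit.
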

\begin{proof}
Using the property that $1-e^{-x} \leq x$ for $x\geq 0$ and then following the same proof steps of Prop.~\ref{prop: upper-bound-s} to build the linear dynamical system that provides the upper bounds for the states, we can easily obtain the desired results.
\end{proof}
\begin{proposition}[Stability] \label{prop:stability-bars}
    Assume $B_E^k = B_E$,  $B_I^k = B_I$ and $\Lambda^k = \Lambda$ are invariant with respect to the step $k\geq 0$. Then the system \eqref{eq:sbar} and \eqref{eq:obar}  is asymptotically and exponentially stable with respect to the step $k$ at the origin if 
    \[
    \max_{i \in [n]}|\lambda_i(B_E \odot \Lambda)| <1
    \]
    with $\{\lambda_i(B_E \odot \Lambda), i \in [n]\}$ as all the eigenvalues of $B_E\odot \Lambda$. 
\end{proposition}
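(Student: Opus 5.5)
The plan is to reduce stability of the nonlinear system \eqref{eq:sbar}--\eqref{eq:obar} to stability of a linear comparison system, using the upper bound in Prop.~\ref{prop: upper-bound-bars}.

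\textbf{Step 1 (nonnegativity).} All states stay nonnegative. Since $\lambda_{ij}\ge 0$, $e^{-\bar o_j}\in(0,1]$ and $1-e^{-\bar s_j}\in[0,1]$ for $\bar s_j\ge 0$, every summand in \eqref{eq:sbar}--\eqref{eq:obar} is nonnegative. Hence $\bar s(k)\ge 0$ and $\bar o(k)\ge 0$ elementwise for all $k$, provided the initial condition is nonnegative. This holds for the admissible initial data, $\bar s(0)=-\log(1-p(0))\ge0$ and $\bar o(0)\ge 0$; the origin is clearly an equilibrium. Consequently $\|(\bar s(k),\bar o(k))\|$ is controlled by upper bounds alone.

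\textbf{Step 2 (comparison bounds).} Invoke Prop.~\ref{prop: upper-bound-bars} with time-invariant matrices, so $\Gamma_E(k,0)=(B_E\odot\Lambda)^k$. This gives, for $k\ge1$,
\[
0\le \bar s(k)\le (B_E\odot\Lambda)^k\bar s(0),\qquad 0\le \bar o(k)\le (B_I\odot\Lambda)(B_E\odot\Lambda)^{k-1}\bar s(0).
\]
Both matrices are entrywise nonnegative, so these elementwise bounds pass to norms. For any monotone norm such as $\|\cdot\|_\infty$,
\[
\|\bar s(k)\|_\infty\le \|(B_E\odot\Lambda)^k\|\,\|\bar s(0)\|_\infty,\qquad \|\bar o(k)\|_\infty\le \|B_I\odot\Lambda\|\,\|(B_E\odot\Lambda)^{k-1}\|\,\|\bar s(0)\|_\infty .
\]
The dependence on $\bar o(0)$ drops out entirely, because $e^{-\bar o_j}\le 1$ was used in the bound. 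Also, $\bar o(k)$ for $k\ge1$ is determined by $\bar s(k-1)$ and $\bar o(k-1)$ and is bounded as above, so the first step needs no separate treatment beyond this.

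\textbf{Step 3 (exponential decay).} Let $\rho=\max_i|\lambda_i(B_E\odot\Lambda)|<1$. By Gelfand's formula, or equivalently a Jordan form bound, for any $\gamma\in(\rho,1)$ there is $c_\gamma$ with $\|(B_E\odot\Lambda)^k\|\le c_\gamma\gamma^k$. Then
\[
\|(\bar s(k),\bar o(k))\|\le C\gamma^k\|(\bar s(0),\bar o(0))\|,
\]
where $C=c_\gamma\max\{1,\|B_I\odot\Lambda\|/\gamma\}$. This gives Lyapunov stability with $\delta=\epsilon/C$, together with exponential convergence to $0$.

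The main obstacle is that the bound in Prop.~\ref{prop: upper-bound-bars} is a one-sided elementwise inequality, so converting it into a two-sided norm estimate depends on Step 1. I would therefore state explicitly that stability is meant over the physically meaningful nonnegative orthant, and that the perturbations $\bar o(0)$ enter only through the bounded factor $e^{-\bar o}\le1$.
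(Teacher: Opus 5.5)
Your proposal is correct and follows the same route as the paper. The paper's proof simply combines the comparison bound of Prop.~\ref{prop: upper-bound-bars} (with $\Gamma_E(k,0)=(B_E\odot\Lambda)^k$ in the time-invariant case) with the spectral-radius criterion for discrete-time linear systems. Your nonnegativity step and the explicit estimate $\|(B_E\odot\Lambda)^k\|\le c_\gamma\gamma^k$ make rigorous what that one-line proof leaves implicit, namely that stability is meant on the nonnegative orthant, where the one-sided elementwise bounds control the norm.
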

\begin{proof}
    Prop.~\ref{prop: upper-bound-bars} together with the condition for the stability of   discrete-time linear systems, imply the desired result.
\end{proof}
\begin{remark}
    The conditions for properties in Prop.~\ref{prop:stability-bars} depend on the excitatory networks but not the inhibitory networks, whereas the results on contraction in Prop.~\ref{prop:contraction} depends on both networks.
\end{remark}
\section{Conclusion}
We generalized  the TransNN model by including inhibitions and  find that the probability of neuron excitations for TransNNs with both inhibitory and excitatory connections under technical assumptions  can be equivalently represented by neural networks where each neuron has a two-dimensional continuous state vector and each link has the TLogSigmoid activation function in \cite{ShuangPeterTransNN22}. Moreover,   neurotransmitter populations were considered in an extended model, and  Poisson approximations was applied to establish limit models when the number of neurotransmitters at each link are infinite.  Sufficient conditions for stability and contraction properties of the limit network model have been established

Future work should investigate the existence of non-trivial equilibria,  the integration of the neuronal dynamics for action potentials with the proposed transmission models,  the consideration of spiking sequences of neurons, 
the control of such network systems when the number of neurons are large, and the game-theoretic modeling of the dynamics with an individual objective function (such energy function, abundance of resources for firing) for each neuron. 
\bibliographystyle{IEEEtran}
\bibliography{mybib}
\appendix[Proof of Proposition~\ref{prop:contraction}]
\begin{proof}
    We follow the idea of contraction analysis  for dynamical systems (see  e.g. \cite{Bullo_contraction,lohmiller1998contraction}). 
    The gradients of $\phi(s,o)$ satisfy 
\[
\begin{aligned}
\frac{\partial \phi}{\partial s } (s, o) &  = \text{diag}(e^{-o_1}e^{-s_1}, \cdots, e^{-o_n}e^{-s_n}) \\
   \frac{\partial \phi}{\partial o}(s, o)  & = \text{diag}(-e^{-o_1}(1-e^{-s_1}), \cdots, -e^{-o_n}(1-e^{-s_n})) \\
\end{aligned}
\]
  Then the following hold
\begin{equation}\label{eq:phi-derivative}
\left\| \MATRIX{\frac{\partial \phi}{\partial s}(s, o) &  \frac{\partial \phi}{\partial o}(s, o)}\right\|_p  \leq \max_{i\in  [n]} e^{-o_i}\leq 1
\end{equation}
for all $s, o \in [0,+\infty]^n$ and $p \in \{ 1,  \infty\}$.
Let \[
F^k(s,o)\triangleq  \MATRIX{B_E^k \odot \Lambda^k \\  B_I^k \odot\Lambda^k } \phi(s, o), \quad s, o \in \BR^n .
\]
The Jacobian of the system \eqref{eq:compact-barso} satisfies 
\[
\begin{aligned}
    J^k (s,o)  & \triangleq \MATRIX{\frac{\partial F^k}{\partial s} (s,o) &  \frac{\partial F^k}{\partial o} (s,o)} \\
    &  = \MATRIX{B_E^k \odot \Lambda^k \\  B_I^k \odot\Lambda^k }  \MATRIX{\frac{\partial \phi}{\partial s} (s, o) &  \frac{\partial \phi}{\partial o} (s, o)}. 
\end{aligned}
\]
Then the submultiplicativity of the induced norms ($\|\cdot\|_1$ and $\|\cdot\|_\infty$), together with \eqref{eq:contraction} and \eqref{eq:phi-derivative}, implies 
\begin{equation} \label{eq:jacobian-norm}
    \sup_{s, o \in [0,+\infty]^n}\|J^k(s,o)\|_p  \leq \left\|\MATRIX{B_E^k \odot \Lambda^k \\  B_I^k \odot\Lambda^k }  \right\|_p < 1.
\end{equation}
    We introduce the following notational simplification:
    $$
    \begin{aligned}
     & y(k)\triangleq [\bar{s}(k),\bar{o}(k)]^\TRANS,  {\quad  } y^*(k)\triangleq [\bar{s}^*(k),\bar{o}^*(k)]^\TRANS \\
     &   \delta y(k)\triangleq y(k) - y^*(k),  \quad f(y(k)) \triangleq F^k(\bar{s}(k), \bar{o}(k)).
    \end{aligned}
    $$
Then $\delta y(k+1) = f(y(k)) -  f(y^*(k))$. 
Define $$g(\tau) \triangleq f(y^*(k) + \tau(y(k) -y^*(k)))$$ for $\tau \in [0,1]$. Then 
$$
\delta y(k+1) = f(y(k)) -  f(y^*(k))  =  g(1) - g(0) = \int_0^1 g^\prime(\tau) d \tau.
$$
Furthermore, since 
$$
\begin{aligned}
    g^\prime (\tau) &  = \frac{\partial}{\partial y} f(y^*(k)+ \tau (y(k)-y^*(k))) (y(k) - y^*(k)) \\
    &  =\frac{\partial}{\partial y} f(y^*(k)+\tau \delta y(k)) \delta y(k),
\end{aligned}
$$
we obtain
\begin{equation}
\begin{aligned}
    \delta y(k+1)  
     = \int_0^1 \frac{\partial}{\partial y} f(y^*(k)+\tau \delta y(k))d\tau ~ \delta y(k).
\end{aligned}
\end{equation}
Let $\bar{s}_\tau(k)\triangleq \bar{s}^*(k) + \tau (\bar{s}(k)- \bar{s}^*(k))$ and $\bar{o}_\tau(k)$ defined similarly. That is, $y^*(k)+\tau \delta y(k) = [\bar{s}_\tau(k),\bar{o}_\tau(k)]^\TRANS $. Then
\[
\begin{aligned}
& \frac{\partial}{\partial y} f(y^*(k)+\tau \delta y(k)) = J^k(\bar{s}_\tau(k),\bar{o}_\tau(k)) \\ 
& =   \MATRIX{B_E^k \odot \Lambda^k \\  B_I^k \odot\Lambda^k } \MATRIX{\frac{\partial \phi}{\partial s}(\bar{s}_\tau(k), \bar{o}_\tau(k)) &   \frac{\partial \phi}{\partial o}(\bar{s}_\tau(k), \bar{o}_\tau(k))}. 
\end{aligned}
\]
Hence 
\begin{equation*}
\begin{aligned}
    \|\delta y(k+1)\|_p  
     & \leq \int_0^1 \left\|\frac{\partial}{\partial y} f(y^*(k)+\tau \delta y(k))\right\|_pd\tau ~ \|\delta y(k)\|_p \\
     & = \int_0^1 \left\|J^k(\bar{s}_\tau(k),\bar{o}_\tau(k)))\right\|_pd\tau ~ \|\delta y(k)\|_p \\
     & < \|\delta y(k)\|_p,
\end{aligned}
\end{equation*}
where the last inequality is due to \eqref{eq:jacobian-norm}. Therefore \eqref{eq:contraction-per-step} holds. The property in \eqref{eq:contraction-from-start} follows  by iteratively applying  the inequality \eqref{eq:contraction-per-step}. 
\end{proof}
\end{document}